\documentclass[journal]{IEEEtran}
\ifCLASSINFOpdf
\else
\fi

\usepackage{balance}

\usepackage{colortbl}
\usepackage{booktabs}

\usepackage{array}
\newcolumntype{P}[1]{>{\centering\arraybackslash}p{#1}}

\usepackage{xcolor}
\definecolor{darkgreen}{rgb}{0.01, 0.75, 0.24}

\usepackage{amssymb}
\usepackage{pifont}

\makeatletter
\newcommand{\setlabel}[1]{\edef\@currentlabel{#1}\label}
\makeatother

\usepackage{amsmath}

\usepackage{hyperref}
\usepackage{cleveref}

\usepackage{amsthm}

\theoremstyle{plain}
\newtheorem{prop}{Proposition}
\crefname{prop}{Proposition}{Propositions}

\crefname{prope}{Property}{Properties}

\newtheorem{definition}{Definition}
\crefname{definition}{Definition}{Definitions}
\newtheorem{lemma}{Lemma}
\crefname{lemma}{Lemma}{Lemmas}
\newtheorem{theorem}{Theorem}
\crefname{theorem}{Theorem}{Theorems}

\crefname{axiom}{Axiom}{Axioms}

\theoremstyle{definition}
\newtheorem{example}{Example}

\crefname{example}{Example}{Examples}

\usepackage{todonotes}

\let\oldtodo\todo
\renewcommand{\todo}[1]{\oldtodo[inline]{#1}}

\newif\ifshowrev
\showrevfalse   
\newenvironment{markrev}{\ifshowrev\color{blue}\fi}{}

\begin{document}
%
\title{A Formal Resilience Framework for\\ Cyber-Physical Embodied Systems under Device-Level Cyberattacks}
%
%
%

\author{Alberto Giaretta
\thanks{A. Giaretta is with the Department of Computer Science, Örebro University, Sweden and the AI, Robotics and Cybersecurity Center (ARC), Örebro University, Sweden e-mail:alberto.giaretta@oru.se.}
\thanks{Manuscript received August 01, 2025; revised August 01, 2025.}}

%
%

\markboth{Journal of \LaTeX\ Class Files,~Vol.~14, No.~8, August~2015}%
{Shell \MakeLowercase{\textit{et al.}}: Bare Demo of IEEEtran.cls for IEEE Journals}
%



\maketitle

\begin{abstract}
In cyber-physical systems (CPSs), fault tolerance is traditionally achieved by analysing sensor and actuator outputs, detecting progressive drift or sudden failures, and initiating suitable tolerance mechanisms. Reasonable under general failure models, this approach fails to capture nuanced disruptions caused by cyberattacks, which may employ subtle strategies. This is particularly critical in embodied CPSs, where computational and physical devices not only have an active role in task completion, but also in embodiment preservation (that is, maintaining the system’s physical integrity). To prevent structural physical damage, embodied CPSs require a framework that enables proactive response to cyberattacks. 
This paper proposes a formal dependability framework that incorporates IDS information into resilience evaluation predicates, enabling assessment of tolerance to disruption and degradation. The framework supports structured reasoning about how cyberattacks affect task execution and embodiment preservation, and whether mitigation strategies must be deployed. Analytical examples demonstrate its analytical capability and soundness, establishing a theoretical foundation for dependable and secure embodied CPSs.
\end{abstract}

\begin{IEEEkeywords}
Cyber-physical systems (CPS), Embodied CPS, Cybersecurity, Robotics, IoT, Embodiment Integrity, Framework, Dependability, Formal Methods, Intrusion Detection, Resilience Modelling, Security Assurance.
\end{IEEEkeywords}

%
\IEEEpeerreviewmaketitle

\section{Introduction}\label{sec:intro}
A cyber-physical system (CPS) is a system in which software, computational devices, and physical components (e.g., sensors and actuators), are tightly coupled to achieve a set of goals~\cite{mandrioli2025}. CPSs encompass a wide range of domains, including vehicles, manufacturing plants, medical devices, process controls, power generation and distribution, water management systems, and distributed robotics~\cite{lee2015}. 

Fault tolerance for CPSs revolves around a core concept: obtaining a system that is capable of detecting, diagnosing, isolating, and recovering from failures~\cite{piardi2020}. Traditional strategies for detecting faults are based on the fundamental assumption that faults represent physical anomalies in a component, hence the faults can be observed as abnormal output~\cite{nauta2025,taheri2024,alwan2022}. In principle, fault detection methods can be used to detect some classes of cyberattacks, the ones that evidently affect devices' availability or performance~\cite{park2017}. An example of such category is denial of service (DoS) attacks. 

However, many cyberattacks differ from faults, in that they are orchestrated by an adversary that aims to remain undetected~\cite{taheri2024}, making traditional fault detection ineffective. Instances of these attacks are replay attacks, or timing anomaly attacks. One of the most prominent examples of a stealthy cyberattack on CPSs is Stuxnet, regarded as a landmark in modern cyberwarfare. In its latest version, Stuxnet targets one of the weak points of Iranian uranium-enriching centrifuges, their resistance to rotor wall pressure. By manipulating rotor speeds, for just as little as 30 minutes once a month, Stuxnet was capable of slowly damaging a number of centrifuges. It later became apparent that the malware would have been capable of catastrophic damage, but was specifically designed to create slow, hard-to-detect disruption that undermined Iranian engineers' confidence~\cite{langner2013kill}. Stuxnet highlights how fault detection alone is not sufficient for CPSs.  

Recent advances in AI and robotics have given rise to a new category of CPSs: \textit{embodied CPSs}. These systems are designed to be fully autonomous, capable of updating their own objectives, and interacting with surrounding dynamic environments, including human users~\cite{shea-blymyer2021}. What distinguishes embodied CPSs from general CPSs is that their computational and physical components are not just tightly coupled, but functionally interdependent. Consequently, the system’s ability to interpret its environment and act meaningfully depends on the integrity of its own physical structure. 

In traditional CPSs, cyberattacks may compromise availability or functionality, but embodiment is not inherently at risk. In embodied CPSs, a compromised component may not just disable a function; it may distort perception or control in ways that endanger both task completion and physical integrity. The embodiment is not just a means of execution, but must be protected as a system-critical asset.

We contend that embodied CPSs impose higher requirements for cybersecurity resilience, compared to traditional CPSs. From a dependability standpoint, embodied CPSs must maintain both availability and structural integrity even in the presence of compromised devices. From a cybersecurity standpoint, these systems must integrate intrusion-detection information into their resilience reasoning process. Addressing this interplay, the present work bridges \textit{dependability modelling} and \textit{cybersecurity analytics} within a unified formal framework for resilience reasoning.

On the grounds of this argument, cybersecurity must be treated not as an external aspect to be evaluated separately, but as an intrinsic component of systems' runtime reasoning. Embodied CPSs must continuously assess how cyberattacks affect not only task execution, but the preservation of their own embodiment. To support this, in this paper we propose a formal and theoretical framework that integrates functional goals, embodiment constraints, device-level roles, and intrusion detection system (IDS) signals into a unified model for resilience evaluation. Our framework extends classical fault-tolerance reasoning to include IDS-informed dependability evaluation, thereby bridging traditional reliability analysis with modern intrusion-aware resilience modelling.

\subsection{Contributions}
This paper makes the following key contributions:
\begin{itemize}
  \item We formally characterize the resilience requirements of embodied CPSs under cyberattacks.
  \item We introduce a set of resilience predicates that capture disruption tolerability, degradation thresholds, and mitigation feasibility based on device-level criticality.
  \item We propose a formal theoretical framework that, albeit being IDS-implementation agnostic, integrates IDS information into runtime reasoning to assess and maintain both functional and embodiment integrity.
  \item We prove system-level soundness guarantees through a series of lemmas and resilience theorems, enabling structured, provable resilience evaluation.
\end{itemize}

\section{Related Work}
\label{sec:rel_work}
Scholars have extensively discussed the importance of IDS for CPSs.  
For example, in 2014 Han et al.~\cite{han2014} surveyed the process of integrating intrusion detection mechanisms in CPSs and outlined a list of requirements for making IDSs useful in the CPS context. The authors argue that IDSs for CPSs should be anomaly-based, distributed, and online (i.e., operating as close to real time as possible). In addition, the IDS should be fault-tolerant and avoid exposing any privacy-sensitive data during its analysis.

More recently, Segovia-Ferreira et al.~\cite{segovia2024} surveyed advances in cyber-resilience approaches for CPSs. One important point highlighted in this paper is that two main categories of detection approaches can be deployed. The first is a traditional data-based approach, in which data from devices is analysed through pattern recognition or machine learning (ML) techniques. Although such approaches are capable of detecting cyberattacks, they are not suitable for every kind of cyber-physical attack, as they do not consider the underlying CPS control models. The second category, model-based detection, takes into account the physical model of the system and compares its expected output to the actual output; if the system behaviour is inconsistent with the model's expected output, a cyberattack is assumed to be ongoing. Thanks to this feedback control, this category is more suitable for identifying cyber-physical attacks, but it requires a reliable model. Their observation lays the foundation for this paper. We share the view of Segovia-Ferreira et al.~\cite{segovia2024} that information about cyberattacks does not directly translate into meaningful insight for CPSs regarding their cyber-physical resilience. Our formal framework tackles this exact problem, bridging the gap between cyberattack detection and reasoning about cyber-physical resilience.

Since then, various works have investigated different implementations of IDSs for CPSs, and we present a non-exhaustive list of notable examples. It is worth clarifying that our work does not address the design or implementation of IDSs for embodied CPSs. Instead, we provide a theoretical framework that assumes the availability of IDS outputs and integrates them into a holistic resilience model for embodied CPSs. Although our framework is IDS-agnostic, it is relevant to show that such IDSs have been proposed in the literature.

For example, Freitas de Araujo-Filho et al.~\cite{freitas2021} proposed FID-GAN, an unsupervised IDS that leverages a generative adversarial network (GAN) and a fog-based architecture. Their edge-compatible design, combined with an encoder that accelerates the computation of the reconstruction loss, allows FID-GAN to meet the low-latency requirements outlined by Han et al.~\cite{han2014}. Their detection rates are also improved with respect to baseline approaches such as MAD-GAN~\cite{li2019} and ALAD~\cite{zenati2018}. Pinto et al.~\cite{pinto2022} discuss how modern technology such as the Internet of things (IoT) and edge/fog computing are reshaping design and development processes of cyber-physical production systems (CPPSs). Devices are independently developed and then integrated in the final system, making CPPSs' (and, we argue, CPSs') security different from classic information and communication technology (ICT) security. Therefore, they propose a bio-inspired anomaly-based host IDS (A-HIDS), based on the incremental dendritic cell algorithm (iDCA) technique, showing its suitability for edge devices. A third example of an attack detection approach for CPSs was given by Yan et al.~\cite{yan2019}, who 
propose an ML-based attack detection scheme, which models and monitors the system's behaviour at the physical layer. Their method, based on extreme learning machines (ELMs), uses a hybrid feature set that includes statistical descriptors, physics-based residuals, and deep learned representations from sensor data. While the focus is on detecting attacks at the measurement level, the work supports our assumption that well-calibrated physical-layer IDSs can serve as reliable inputs for higher-level reasoning frameworks. Other works have proposed alternative solutions, and we refer the reader to dedicated surveys on the topic~\cite{han2014,segovia2024,olowononi2021,giraldo2017,mitchell2014}.

Kayan et al.~\cite{kayan2022} note that the existing literature has often focused heavily on IDSs in isolation, neglecting architectural and policy-level vulnerabilities. Our work supports their critique, reinforcing the message that IDSs by themselves are not enough for CPSs, as they provide only a cybersecurity lens on the system's status. But IDSs should not be discarded either. We position IDSs as one element within a resilience framework tailored to embodied CPSs, where the cyber-physical coupling necessitates tight integration of intrusion awareness into goal-critical reasoning.

In the field of control theory for CPSs, various approaches adopting formal methods have been proposed, for modelling both safety and cybersecurity~\cite{yin2020}. Lanotte et al.~\cite{lanotte2017} formalise cyber-physical attacks using a hybrid process calculus (CCPSA) that models how sensor and actuator tampering affects system behaviour over time. However, they assess tolerance based on whether an attack leads to observable deviations, compared to the nominal system. In fact, their modelled IDS monitors the operational temperature of a physical component. In contrast, our approach incorporates cybersecurity IDS' outputs, enabling proactive adjustments before observable degradation occurs. This allows us to respond to latent or low-impact threats, such as dormant malware or early-stage intrusions, that may not yet affect system outputs.

Mouelhi et al.~\cite{mouelhi2019} propose a formal methodology for modelling and verifying CPS resilience that echoes the spirit of our proactive approach. Using UPPAAL timed automata, their framework includes an abstract attack detection mechanism, conceptually similar to a binary IDS output, which sets an intrusion flag upon observing malicious activity in the communication layer. This flag is used to formally drive transitions into degraded operational modes and is included in the verification of safety and liveness properties. This approach aligns closely with our integration of IDS-based attack detection signals into cyberattack resilience modelling for embodied CPSs.

\section{Propositions} \label{sec:informal_propositions}
In this section, we provide two informal propositions that frame the goals we aim to achieve with our formal framework. For each proposition we provide a relevant use case example, highlighting the critical properties that can be achieved through such a formal framework.

\begin{prop}\label{prop_1}
    A robotic framework should be aware of which disruptions and how much performance degradation it can tolerate, based on component criticality and acceptability thresholds.
\end{prop}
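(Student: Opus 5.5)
This proposition is a normative, informal design requirement, so it is not something one can prove in the deductive sense. What I would establish is that the formal framework developed in the following sections \emph{realises} it. The plan is to formalise each informal notion, then show that the resulting predicates are decidable and meaningful for any concrete system. If the framework instantiates all the ingredients, the proposition follows as a soundness property of the framework.

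The key steps, in order, are as follows.

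First, I would model the embodied CPS as a finite set of devices $D$, a set of functional goals $G$ and a set of embodiment constraints $E$. A support relation would record which devices each goal or constraint depends on.

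Second, I would assign each device a criticality level relative to each goal and constraint. At minimum the levels are \emph{critical}, \emph{degradable} and \emph{redundant}, so that "component criticality" becomes a function $D\times(G\cup E)\to\{\text{levels}\}$.

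Third, I would introduce acceptability thresholds $\theta_g\in[0,1]$ on an aggregate performance measure per goal. The IDS output would enter as a per-device compromise indicator or score.

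Fourth, I would define two predicates. The first is a disruption-tolerability predicate: the system tolerates the loss of a set $S\subseteq D$ iff no critical device for any embodiment constraint lies in $S$, and every goal retains a supporting device set. The second is a degradation-tolerability predicate: the aggregated performance under the IDS-flagged set stays at or above $\theta_g$.

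With these definitions, "awareness" in the statement becomes the claim that both predicates are computable from the model plus IDS signals. I would prove this by noting that $D$ is finite and each predicate is a finite conjunction of threshold comparisons. This decides tolerability for every disruption set.

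I would then prove two sanity lemmas. Monotonicity: if $S\subseteq S'$ and $S'$ is tolerable, then so is $S$. Criticality consistency: compromising any critical device always yields intolerability. These show that the predicates faithfully reflect criticality and thresholds, which is the content of the proposition.

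The main obstacle I expect is the aggregation function for degradation. Monotonicity requires performance to be non-increasing as devices are removed. Interactions between redundant devices, or embodiment constraints that depend on device combinations, could break this. I would first try a simple min- or weighted-sum aggregation, where monotonicity is immediate. I would fall back to an explicit monotonicity assumption on the aggregation if richer models are needed. Finally, I would illustrate with a concrete robotic example, such as a legged robot with a compromised joint actuator, to show that the predicates return the intuitively correct verdicts.
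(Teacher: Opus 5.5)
Your proposal is correct and follows essentially the paper's route: Proposition~1 is not proved deductively but realised through the criticality mappings $\tau,\varepsilon$, the predicates $\delta$ and $\gamma$ (with $\psi$ assumed monotone non-increasing, which is exactly your fallback), the strictness lemma for $\delta$ (your criticality consistency), the monotonicity lemma for $\gamma$, and the worked evaluation of Example~1. The differences are in detail: the paper makes the threshold itself criticality-dependent, $\theta(S)\in\{\theta_{\mathrm{crit}},\theta_{\mathrm{base}}\}$, so its monotonicity proof also needs $\theta(S)\le\theta(S')$ for $S\subseteq S'$ (from $\theta_{\mathrm{crit}}>\theta_{\mathrm{base}}$), whereas your fixed per-goal $\theta_g$ make monotonicity follow from aggregation monotonicity alone; and the paper states $\delta$ directly via level-$2$ criticality for both tasks and goals, which makes strictness immediate, while your separate support relation for functional goals requires defining ``critical'' as ``contained in every supporting set'' for criticality consistency to hold.
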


\begin{example}
\label{ex:1}
Assume a wheeled robot, equipped with eight cameras and four robotic arms, designed for navigating the environment, picking up objects, and transporting them to different destinations. These components are off-the-shelf IoT devices, developed separately by different manufacturers and integrated into a single embodiment using middleware — for example, the Robot Operating System 2 (ROS2). One camera and one robotic arm are labelled as critical devices, as they represent the minimum set required for the robot to achieve its tasks and preserve embodiment integrity.

Suppose a cyberattack strikes the system and deactivates two out of eight cameras, as well as one robotic arm; however, none of the disrupted devices are labelled as critical. Intuitively, even though the robot has been compromised, it can continue to operate. Although it could perform its tasks more efficiently with all devices functioning, it is still capable of navigating and detecting objects using the critical camera and the remaining functional ones. Once the object has been detected, the robot can use the critical arm to pick it up, potentially assisted by the other operational arms.

In light of these observations, although some of the embodied devices have been disrupted and rendered unusable, the system as a whole remains capable of safe operation. Throughout this paper, we refer to this condition as \emph{tolerable disruption}: a scenario in which the embodiment has been subjected to cyberattacks affecting some devices, but none of the critical ones. A formal definition is introduced in \Cref{sec:formal_def}.
\end{example}

\begin{prop}\label{prop_2}
    A robotic framework should assess whether disruption and degradation can be mitigated by evaluating whether appropriate mitigation actions exist for the compromised devices involved, and applying them.
\end{prop}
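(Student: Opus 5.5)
Proposition~\ref{prop_2} is an informal design requirement rather than a formal claim, so there is no derivation to give yet. What I can do is turn it into a precise statement, to be discharged once the formal definitions of \Cref{sec:formal_def} are available. I would model the embodiment as a finite set of devices $D$ with a distinguished critical subset $C \subseteq D$, following \Cref{ex:1}. An IDS labelling function would mark each device as nominal, degraded, or disrupted. For each compromised device $d$, I would attach a finite, possibly empty, set of mitigation actions $M(d)$, such as isolation, reconfiguration, or re-flashing. Each action would have an effect on device status and on the satisfaction of the task and embodiment predicates. The proposition then becomes the following statement: whether the compromised configuration can be brought back to tolerable disruption or acceptable degradation by some choice of mitigations is decidable from the IDS output, the criticality labels, and the sets $M(d)$; moreover, applying such a choice yields a state satisfying the resilience predicates.

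The argument would run in three steps. First, I would show monotonicity: mitigating a device never worsens the status of any other device. This lets us reason device by device. Second, using the tolerable-disruption notion already illustrated in \Cref{ex:1}, I would separate the two cases. If no critical device is compromised, the system is already tolerable. Mitigation may then optionally restore performance among non-critical devices, but it is not needed for safety. If some critical device $c \in C$ is compromised, the system is recoverable if and only if every such $c$ has a nonempty $M(c)$ whose effect restores $c$ to the nominal state or to an acceptable degraded state. Third, decidability follows because $D$ and every $M(d)$ are finite. One simply checks the existence condition over the compromised set, and soundness follows from the stated effect of each action.

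I expect the main obstacle to be the monotonicity and independence assumption in the first step. Mitigations such as isolating a device can remove capabilities that other devices, or the embodiment constraints, depend on. Without care, a mitigation could therefore turn a tolerable state into an intolerable one. My first remedy would be to require each mitigation's effect specification to include its side effects on the remaining devices. Feasibility would then be re-checked globally on the post-mitigation configuration, rather than device-locally. With this change, the claim reduces to a finite search over combinations of mitigations, which is still decidable but no longer compositional.
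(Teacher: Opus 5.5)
You are right that \Cref{prop_2} is a design requirement rather than a theorem. The paper also discharges it by instantiation. \Cref{def:mit_feas} sets $\mu(S)=1$ iff some $M\subseteq S\cap D_m$ gives both $\delta(S\setminus M)=1$ and $\gamma(S\setminus M)=1$. \Cref{theorem:res_guar,thm:ids_resilience} and the evaluation of \Cref{ex:2} then show that applying such an $M$ restores tolerability.

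The genuine gap is in your Step~2, which drops the degradation half of the requirement. In the paper, tolerability is the conjunction $\delta=1\land\gamma=1$. Here $\gamma(S)=1\iff\psi(S)\ge\theta(S)$ is a property of the whole compromised set, not of individual devices. Compromising only non-critical devices gives $\delta(S)=1$, yet it can still drive $\psi(S)$ below $\theta_{\mathrm{base}}$. An example is the robot of \Cref{ex:1} with all seven redundant cameras and all three redundant arms compromised. Even in the evaluation of \Cref{ex:1}, the paper has to \emph{assume} $\psi(S)\ge\theta_{\mathrm{base}}$. In such a case the system is not ``already tolerable'': by \Cref{thm:ids_resilience} it must find a mitigation among the non-critical devices or halt.

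Hence your ``if and only if'' fails in the sufficiency direction. A restoring action for each compromised critical device secures $\delta(S\setminus M)=1$, but it says nothing about $\gamma(S\setminus M)$. Conversely, the paper's device status is binary ($d\in S$ or not). By \Cref{lemma:strictness_delta}, any critical device left in $S\setminus M$ forces $\delta=0$, so restoring a critical device to an ``acceptable degraded state'' does not count. Relatedly, your target statement must read ``tolerable disruption \emph{and} acceptable degradation''. Under the disjunctive reading, \Cref{ex:2} before mitigation would already count as recovered, since there $\delta(S)=0$ but $\gamma(S)=1$ because $\psi(S)=0.9\ge 0.85$.

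On Step~1, the paper removes your obstacle by a modelling choice rather than an argument. A mitigation only removes devices from the compromised set, $S\mapsto S\setminus M$, with no side effects (see the remark after \Cref{def:tol_deg}). The monotonicity you need then lives at the level of sets. $\delta$ is trivially downward closed. $\gamma$ is downward closed by \Cref{lemma:gamma-monotone}, because $\psi$ is non-increasing and $\theta$ is non-decreasing under inclusion. This gives more than your device-wise independence: $\mu(S)=1$ iff the maximal choice $M=S\cap D_m$ works, i.e.\ iff $\delta(S\setminus D_m)=1$ and $\gamma(S\setminus D_m)=1$. So no combinatorial search is needed, and Step~3 becomes trivial.

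Your concern about side effects is nevertheless a fair critique of that model. In \Cref{ex:2}, reassigning the critical role to $C_2$ would set $\tau(C_2,\text{TransportCube})=2$. That lowers $\kappa(C_2)$ to $\kappa_{\mathrm{crit}}$ and could put $C_2$ into $S$, which the update $S\setminus M$ ignores. Your global re-check is the right repair if you want to model this. To match the proposition, however, the re-check must evaluate both $\delta$ and the set-level $\gamma$ on the post-mitigation configuration, not just per-device restoration of critical devices.
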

\begin{example}\label{ex:2}
Consider the same wheeled robot, previously introduced in \Cref{ex:1}, equipped with eight cameras and four robotic arms, designed for navigating the environment, picking up objects, and transporting them to different destinations.

Suppose a cyberattack compromises the robot and disables the camera marked as critical, along with two non-critical cameras and one non-critical robotic arm. In this case, a required device is rendered unusable, making the disruption not tolerable ($\delta(S) = 0$); even if other devices remain functional, the system cannot proceed safely in this state.

However, the robot still has five functioning cameras, and the framework determines that the perception pipeline can be reconfigured: the critical role can be reassigned to one of the functioning cameras, thereby restoring the system's ability to navigate and detect objects. After this mitigation action, tolerable disruption and degradation are re-established, and the robot is capable of fully resuming task execution.

Throughout this paper, we refer to the ability of robotic systems to evaluate and apply such mitigation strategies to restore tolerability as \emph{mitigation feasibility}. We formalize this concept in \Cref{sec:formal_def}, using the predicate~$\mu(S)$.
\end{example}

\section{Threat Model and Assumptions}
We consider embodied CPSs whose devices may be targeted by adversaries through various cyberattacks aimed at disrupting sensing and actuating mechanisms. Our proposed framework is agnostic with respect to specific attack types and models their effects abstractly through a probabilistic intrusion detection function, $I(d)$ (\Cref{def:prob_ids}). Cyberattacks are therefore represented as adversarial manipulations of device-level operational integrity. 

The adversary is assumed to have partial, but not full, system knowledge, sufficient to compromise on-board devices but not to predict subsequent mitigation strategies. The objective of the proposed framework is to reason formally about system-level resilience and mitigation feasibility, independently of any particular IDS implementation or attack vector.

\section{Formal Definitions}\label{sec:formal_def}
Although the propositions introduced in \Cref{sec:informal_propositions} are conceptually intuitive, a formal framework is required to enable precise and unambiguous evaluation. This section introduces the necessary mappings and predicates that instantiate \Cref{prop_1,prop_2}.

\subsection{Core Definitions}
We consider a modular robotic system \( R \), composed of a finite set of devices \( D = \{d_1, d_2, \dots, d_n\} \). The robot can perform a finite set of tasks \( T = \{t_1, t_2, \dots, t_p\} \), and must preserve a finite set of embodiment-related goals \( G = \{g_1, g_2, \dots, g_q\} \). Each device \( d \in D \) contributes to task execution and goal preservation through criticality mappings. In the following definitions, we use the standard notation \( 2^D \) to denote the power set of \( D \), i.e., the set of all subsets of devices. In addition, we assume all functions and predicates are evaluated at the current time. For notational simplicity, we omit explicit time-dependence throughout the remainder of the paper.

The following assumptions hold throughout this section:
\begin{itemize}
    \item The sets \( D \), \( T \), and \( G \) are fixed during evaluation;
    \item All mappings are assumed to be correct and complete;
    \item A task \( t \in T \) is feasible if at least one non-compromised device supports it with criticality \( \geq 1 \);
    \item A goal \( g \in G \) is feasible under the same condition;
    \item Mitigations, when applied, take effect before the next control cycle.
\end{itemize}

We begin by introducing two foundational mappings that express the criticality of each device in relation to both task execution and embodiment preservation.
\begin{definition}[Task-criticality mapping \(\tau\)]\label{def:task-crit-amp}
We define the task-criticality mapping as a function:
\[
\tau : D \times T \rightarrow C,
\]
where \( C = \{0, 1, 2\} \) denotes the criticality levels: \( 0 \) for none, \( 1 \) for important, and \( 2 \) for required.
\end{definition}

\begin{definition}[Goal-criticality mapping \(\varepsilon\)]\label{def:goal-crit-amp}
We define the goal-criticality mapping as a function:
\[
\varepsilon : D \times G \rightarrow C,
\]
where \( C = \{0, 1, 2\} \) denotes the previously introduced criticality levels.
\end{definition}

Now we define another foundational predicate that models the behaviour of a probabilistic IDS, followed by a predicate that evaluates this probability against a confidence threshold.
\begin{definition}[Probabilistic IDS function]\label{def:prob_ids}
Let $D$ be the set of devices present in the embodied CPS. We define the IDS output as a function
\[
I : D \rightarrow [0,1],
\]
where $I(d)$ denotes the confidence (i.e., estimated probability) that the device $d \in D$ is currently compromised by a cyberattack.
\end{definition}

\begin{definition}[Device-specific IDS threshold function]\label{def:device_threshold}
Let $\kappa_{\mathrm{base}}, \kappa_{\mathrm{crit}} \in [0,1]$ be fixed thresholds such that $\kappa_{\mathrm{crit}} < \kappa_{\mathrm{base}}$.

We define the device-specific threshold function $\kappa : D \rightarrow [0,1]$ as:
\[
\kappa(d) = 
\begin{cases}
\kappa_{\mathrm{crit}}, & \text{if } \exists t \in T_{\mathrm{active}} : \tau(d, t) = 2 \\
                       & \text{or } \exists g \in G_{\mathrm{active}} : \varepsilon(d, g) = 2 \\
\kappa_{\mathrm{base}}, & \text{otherwise}
\end{cases}
\quad ,
\]
where \( T_{\mathrm{active}} \subseteq T \) and \( G_{\mathrm{active}} \subseteq G \) are the currently relevant tasks and embodiment-preserving goals, respectively.
\end{definition}

In this last core definition, we build on \Cref{def:prob_ids,def:device_threshold} to introduce an unambiguous definition of compromised devices. Throughout the paper, when we refer to a set of compromised devices $S$, we refer to the following. 
\begin{definition}[Criticality-aware compromised set]\label{def:ca_compromised}
Given the IDS function $I(d)$ and threshold function $\kappa(d)$, we define the current set of compromised devices $S$ as:
\[
S = \{ d \in D \mid I(d) \geq \kappa(d) \}.
\]
\end{definition}

\subsection{Extended Definitions}
So far, we have provided the core definitions that lay the foundations for our formal framework. Building on these definitions, we now formalise \Cref{prop_1}, which addresses the robot's ability to assess whether a disruption compromises structural or performance integrity.

\begin{definition}[Tolerable Disruption \(\delta\)]\label{def:tol_dis}
We define the tolerable disruption predicate as a function:
\[
\delta : 2^D \rightarrow \{0,1\},
\]
which evaluates whether a set of compromised devices \( S \subseteq D \) includes any device that is critical (i.e., strictly required) for task execution or goal preservation.

Let \( T_{\mathrm{active}} \subseteq T \) and \( G_{\mathrm{active}} \subseteq G \) be the currently relevant tasks and embodiment-preserving goals, as previously defined in \Cref{def:device_threshold}.

The predicate is defined as:
\[
\delta(S) = 1 \Longleftrightarrow
\begin{aligned}
&\ \forall t \in T_{\mathrm{active}},\;
\nexists d \in S:\; \tau(d, t) = 2 \\
&\land\; \forall g \in G_{\mathrm{active}},\;
\nexists d \in S:\; \varepsilon(d, g) = 2
\end{aligned}
\ \ \,.
\]

This condition is satisfied only if no compromised device is critical for any active task or goal.
\end{definition}

With \Cref{def:tol_dis}, we capture the binary nature of disruption, meaning that a device is either functioning or not: tasks are feasible and embodiment is preserved, only if no required device has been compromised. However, this alone does not allow to reflect on the degradation that occurs under partial loss of functionality. We need the concept of tolerable degradation.

To model tolerable degradation, the second component of \Cref{prop_1}, we introduce a continuous performance score and define acceptable thresholds that reflect whether a degraded state remains operationally viable.
\begin{definition}[Tolerable Degradation \(\gamma\)]\label{def:tol_deg}
Let \( \psi : 2^D \rightarrow [0,1] \) be a monotonic degradation function that evaluates system performance under a set of compromised devices \( S \subseteq D \), where 1 denotes full performance and 0 denotes total failure. We assume that \( \psi \) is monotonic non-increasing, reflecting the fact that additional disruption cannot lead to improved performance. Let \( \theta_{\mathrm{crit}}, \theta_{\mathrm{base}} \in [0,1] \) be performance thresholds, with \( \theta_{\mathrm{crit}} > \theta_{\mathrm{base}} \).

We define the tolerable degradation predicate as a function:
\[
\gamma : 2^D \rightarrow \{0,1\},
\]
which evaluates whether a system under disruption remains within acceptable performance bounds.

The predicate is defined as:
\[
\gamma(S) = 1 \Longleftrightarrow \psi(S) \geq \theta(S),
\]
where the threshold \( \theta(S) \) depends on the presence of critical devices, following:
\[
\theta(S) =
\begin{cases}
\theta_{\mathrm{crit}}, & \text{if } \exists d \in S,\, \exists t \in T_{\mathrm{active}}:\; \tau(d, t) = 2 \\
                       & \text{or } \exists d \in S,\, \exists g \in G_{\mathrm{active}}:\; \varepsilon(d, g) = 2 \\
\theta_{\mathrm{base}}, & \text{otherwise}
\end{cases}
\quad .
\]

The components of this condition are interpreted as follows:
\begin{itemize}
    \item \( \gamma(S) = 1 \): performance within acceptable bounds;
    \item \( \theta_{\mathrm{crit}} \): strict threshold if critical devices are affected;
    \item \( \theta_{\mathrm{base}} \): relaxed threshold for non-critical device disruptions.
\end{itemize}
\end{definition}

\noindent This definition captures the second component of \Cref{prop_1}, ensuring that the robot evaluates not only the feasibility of its functions, but also their operational quality under performance loss. It is worth noting that mitigations, when applied, reduce the set of compromised devices before evaluating \( \psi \). Therefore, they do not violate the monotonic non-increasing assumption for \( \psi \).

We now formalise \Cref{prop_2}, which addresses the robot's ability to determine whether a disruption can be sufficiently mitigated, either by isolating affected components, or by deploying appropriate countermeasures. Such mitigation is realized through a policy that assigns a symbolic mitigation action (isolation, reconfiguration, role reassignment, or a combination) to the devices it can address. \begin{markrev} We write \( D_m \subseteq D \) for the set of devices for which a mitigation action exists. The following predicate verifies the existence of a subset of compromised devices whose mitigation restores tolerability.
\end{markrev}

\begin{definition}[Mitigation Feasibility \(\mu\)]\label{def:mit_feas}
Let \( A \) be the set of symbolic mitigating actions, such as isolating compromised devices, reconfiguring system logic, or reassigning criticality to functionally equivalent, non-compromised devices. 

We define the mitigation feasibility predicate as a function:
\[
\mu : 2^D \rightarrow \{0,1\},
\]
which evaluates whether a compromised set \( S \subseteq D \) can be mitigated, at least partially, to restore sufficient system operability. Specifically:
\[
\mu(S) = 1 \Longleftrightarrow \exists M \subseteq S \begin{markrev}\cap D_m: \end{markrev}\;
\delta(S \setminus M) = 1 \land \gamma(S \setminus M) = 1.
\]

This predicate returns true if there exists a subset of compromised devices $M$ for which an appropriate mitigation strategy can be applied, such that both disruption and degradation become tolerable.
\end{definition}

\noindent The predicate \( \mu \) abstracts away from the specific implementation of the mitigation policy. It only verifies the existence of a subset of devices whose mitigation is sufficient to re-establish tolerability.

\subsection{Correctness Lemmas and Resilience Theorems}
We now establish a set of formal lemmas that validate the correctness of the tolerance and mitigation predicates introduced earlier. These results demonstrate that the framework provides consistent guarantees, regarding disruption impact and system resilience.

We begin by formalising the notion of resilience with respect to task execution. The following lemma characterizes when the robot can tolerate a cyberattack without violating its functional goals.

\begin{lemma}[Soundness of $\delta$]\label{lemma:soundness_delta}
If $\delta(S) = 1$, then no critical device for any task $t \in T_{\mathrm{active}}$ or goal $g \in G_{\mathrm{active}}$ is included in $S$.
\end{lemma}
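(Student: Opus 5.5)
The plan is to unfold \Cref{def:tol_dis} directly; the lemma is essentially the forward direction of the biconditional that defines $\delta$. I will assume $\delta(S)=1$ and use the ``$\Rightarrow$'' direction of the definition. This gives the conjunction
\[
\forall t \in T_{\mathrm{active}}\; \nexists d \in S : \tau(d,t)=2
\;\land\;
\forall g \in G_{\mathrm{active}}\; \nexists d \in S : \varepsilon(d,g)=2 .
\]
Before using it, I will make precise what ``critical device'' means in the statement. Following \Cref{def:task-crit-amp,def:goal-crit-amp} and the interpretation given after \Cref{def:tol_dis}, a device $d$ is critical for $t$ (resp.\ $g$) exactly when $\tau(d,t)=2$ (resp.\ $\varepsilon(d,g)=2$), i.e.\ it has the ``required'' criticality level.

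With that identification, I will argue by contradiction. Suppose some $d^\ast \in S$ is critical for some active task or goal. There are two cases. If $\tau(d^\ast,t^\ast)=2$ for some $t^\ast\in T_{\mathrm{active}}$, instantiate the first conjunct at $t^\ast$. It asserts that no $d\in S$ has $\tau(d,t^\ast)=2$, and $d^\ast$ is such a device, which is a contradiction. The case $\varepsilon(d^\ast,g^\ast)=2$ for some $g^\ast\in G_{\mathrm{active}}$ is symmetric, using the second conjunct. Since both cases are impossible, no critical device lies in $S$.

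There is no real technical obstacle here. The only point needing care is the terminological bridge between the informal phrase ``critical device'' and the formal level $2$ of $C$. In particular, I must not count level $1$ (``important'') devices as critical, since the statement would then be false. I will also record, as a remark, that the converse holds by the ``$\Leftarrow$'' direction. So $\delta$ is in fact a complete characterisation, and this is what later results (e.g.\ on $\theta(S)$ collapsing to $\theta_{\mathrm{base}}$ when $\delta(S)=1$) can reuse.
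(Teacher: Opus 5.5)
Your proposal is correct and takes essentially the same approach as the paper. Both unfold the forward direction of \Cref{def:tol_dis}, read ``critical'' as criticality level $2$, and conclude that no such device lies in $S$; your proof by contradiction over the two cases is just a more explicit version of the paper's direct reading of the two conjuncts.
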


\begin{proof}[Proof sketch]
From the definition of $\delta(S)$, we have:
\[
\delta(S) = 1 \Longleftrightarrow
\begin{aligned}
&\ \forall t \in T_{\mathrm{active}},\;
\nexists d \in S:\; \tau(d, t) = 2 \\
&\land\; \forall g \in G_{\mathrm{active}},\;
\nexists d \in S:\; \varepsilon(d, g) = 2
\end{aligned}
\ \ \,.
\]

This means that for each task $t$ and goal $g$, no device $d$ such that $\tau(d, t) = 2$ or $\varepsilon(d, g) = 2$ is included in $S$.

Therefore, all critical devices for tasks and goals must lie outside of $S$, hence they are operational.
\end{proof}

In parallel, we define resilience in terms of embodiment integrity. The following lemma captures the conditions under which the robot maintains structural coherence despite component-level compromise.
\begin{lemma}[Strictness of $\delta$]\label{lemma:strictness_delta}
If there exists a device \( d \in S \) and a task \( t \in T_{\mathrm{active}} \) such that \( \tau(d, t) = 2 \), or a device \( d \in S \) and a goal \( g \in G_{\mathrm{active}} \) such that \( \varepsilon(d, g) = 2 \), then \( \delta(S) = 0 \).
\end{lemma}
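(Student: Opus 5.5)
The plan is to argue by contraposition directly from \Cref{def:tol_dis}, which defines $\delta(S)=1$ as a biconditional with a conjunction of two universally quantified conditions. Since $\delta$ takes values in $\{0,1\}$, showing $\delta(S)\neq 1$ is the same as showing $\delta(S)=0$. So it suffices to show that, under the hypothesis, the right-hand side of the defining biconditional fails.

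I would split the hypothesis into its two disjuncts. In the first case, fix the witnesses $d\in S$ and $t\in T_{\mathrm{active}}$ with $\tau(d,t)=2$. The first conjunct of the definition requires, for this particular $t$, that no $d'\in S$ satisfies $\tau(d',t)=2$. The witness $d$ is exactly such a $d'$, so the first conjunct is false. Hence the whole conjunction is false, and by the biconditional $\delta(S)\neq 1$, i.e.\ $\delta(S)=0$.

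The second case is symmetric. Given $d\in S$ and $g\in G_{\mathrm{active}}$ with $\varepsilon(d,g)=2$, the instantiation of the second conjunct at $g$ is violated by $d$, so again the conjunction fails and $\delta(S)=0$.

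There is no real obstacle here. The only point to state explicitly is that $\delta$ is Boolean-valued, so the failure of the ``if and only if'' condition for value $1$ forces the value $0$. One could also note that the lemma is exactly the contrapositive of \Cref{lemma:soundness_delta}, read as an implication about membership of critical devices in $S$. Invoking that lemma gives the result immediately: if $\delta(S)=1$ held, no critical device would lie in $S$, contradicting the hypothesis. I would present this one-line contrapositive as the main argument and the case split above as its unpacking.
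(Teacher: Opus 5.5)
Your proposal is correct and is essentially the paper's argument: the paper also reads off from \Cref{def:tol_dis} that the witnessing critical device in $S$ violates the defining condition for $\delta(S)=1$, and uses that $\delta$ is Boolean-valued to conclude $\delta(S)=0$. Framing it as the contrapositive of \Cref{lemma:soundness_delta} is an equivalent repackaging, since that lemma is itself just the forward direction of the same definition.
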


\begin{proof}[Proof sketch]
By definition, \( \delta(S) = 1 \) holds if and only if no critical device for any active task or goal is included in the compromised set \( S \). Therefore, if there exists a device \( d \in S \) that is critical for some task \( t \in T_{\mathrm{active}} \) or goal \( g \in G_{\mathrm{active}} \), the condition for \( \delta(S) = 1 \) is violated. Hence, \( \delta(S) = 0 \).
\end{proof}

\begin{markrev}
Beyond evaluating $\gamma$ at a fixed compromised set, we establish how $\gamma$ behaves as the compromised set changes. The following lemma shows that tolerability is preserved under removal of compromised devices, and dually, that intolerability persists under addition of compromised devices.

\begin{lemma}[Monotonicity of $\gamma$]
\label{lemma:gamma-monotone}
If $S \subseteq S' \subseteq D$ and $\gamma(S') = 1$, then $\gamma(S) = 1$.
\end{lemma}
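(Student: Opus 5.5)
The plan is to chain two monotonicity facts: $\psi$ is non-increasing in the compromised set, and the threshold $\theta$ is non-decreasing in it. Removing compromised devices therefore can only raise performance and only lower the bar it must clear. Concretely, I will aim for the chain
\[
\psi(S) \;\geq\; \psi(S') \;\geq\; \theta(S') \;\geq\; \theta(S),
\]
which gives $\gamma(S)=1$ directly from \Cref{def:tol_deg}.

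The first inequality is exactly the standing assumption in \Cref{def:tol_deg}: $\psi$ is monotonic non-increasing, so $S \subseteq S'$ implies $\psi(S) \geq \psi(S')$. The middle inequality is the hypothesis $\gamma(S') = 1$, unpacked through the definition of $\gamma$.

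The only step needing argument is the last one, $\theta(S) \leq \theta(S')$, i.e.\ monotonicity of $\theta$ itself. I will split on which branch of the definition of $\theta$ applies to $S$.

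If $\theta(S) = \theta_{\mathrm{base}}$, the inequality follows from $\theta_{\mathrm{base}} < \theta_{\mathrm{crit}}$. Both possible values of $\theta(S')$ are then at least $\theta(S)$.

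If $\theta(S) = \theta_{\mathrm{crit}}$, then $S$ contains some $d$ with $\tau(d,t)=2$ for some $t \in T_{\mathrm{active}}$, or with $\varepsilon(d,g)=2$ for some $g \in G_{\mathrm{active}}$. Since $S \subseteq S'$, the same $d$ lies in $S'$. Because $T_{\mathrm{active}}$ and $G_{\mathrm{active}}$ are fixed during evaluation, $\theta(S') = \theta_{\mathrm{crit}}$ as well. This is the same witness-transfer reasoning used in \Cref{lemma:strictness_delta}.

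I expect no real obstacle here. The one point to state explicitly is that the ordering $\theta_{\mathrm{crit}} > \theta_{\mathrm{base}}$ is what makes $\theta$ non-decreasing: a stricter threshold applies to larger, more critical compromised sets. Had the thresholds been ordered the other way, the lemma could fail. The dual statement mentioned before the lemma, that $\gamma(S)=0$ implies $\gamma(S')=0$, is simply the contrapositive and needs no separate proof.
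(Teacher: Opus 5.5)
Your proposal is correct and follows essentially the same route as the paper: the chain $\psi(S) \geq \psi(S') \geq \theta(S') \geq \theta(S)$, with monotonicity of $\theta$ justified by carrying a critical device from $S$ into $S'$ and using $\theta_{\mathrm{crit}} > \theta_{\mathrm{base}}$. Your explicit case split on which branch of $\theta(S)$ applies just spells out what the paper states in a single sentence.
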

 
\begin{proof}[Proof sketch]
Since $\psi$ is monotonic non-increasing and $S \subseteq S'$, we have
$\psi(S) \geq \psi(S')$.
The threshold $\theta$ is non-decreasing under set inclusion: by
\Cref{def:tol_deg}, $\theta(S) = \theta_{\mathrm{crit}}$ if $S$ contains a
device with criticality score $2$. Otherwise,  $\theta(S) = \theta_{\mathrm{base}}$, with
$\theta_{\mathrm{crit}} > \theta_{\mathrm{base}}$. Because $S \subseteq S'$, the presence of a critical device in $S$ implies its presence in $S'$, and therefore $\theta(S) \leq \theta(S')$. From $\gamma(S') = 1$ and \Cref{def:tol_deg} we have $\psi(S') \geq \theta(S')$. Chaining these inequalities results in:
\[
\psi(S) \;\geq\; \psi(S') \;\geq\; \theta(S') \;\geq\; \theta(S),
\]
hence $\psi(S) \geq \theta(S)$, so $\gamma(S) = 1$.
\end{proof}
\end{markrev}

\subsection{Theorems}
We now generalize these conditions into a system-level guarantee. \begin{markrev}
Here, resilient denotes that the system either currently tolerates the disruption or possesses a mitigation that restores tolerability; it is a property of the system together with its available recovery capabilities, not of the instantaneous device state alone.\end{markrev} The following Theorem formalises this notion: 
\begin{markrev} the system is resilient exactly when the disruption is either directly tolerable or restorable through mitigation.\end{markrev}

\begin{theorem}[System Resilience Guarantee]\label{theorem:res_guar}
Given a compromised set \( S \subseteq D \), the system is resilient \begin{markrev}if and only if:
\[
\bigl(\delta(S) = 1 \land \gamma(S) = 1\bigr) \lor \mu(S) = 1.
\]\end{markrev}
\end{theorem}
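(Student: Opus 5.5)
The text preceding the statement defines ``resilient'' as: the system either currently tolerates the disruption, or has a mitigation that restores tolerability. The plan is therefore definitional. I would first make this informal notion precise. Tolerating the disruption means the current compromised set $S$ satisfies both tolerability predicates of \Cref{prop_1}, that is $\delta(S)=1$ and $\gamma(S)=1$. Having a restorative mitigation means there is a set of mitigable compromised devices whose removal makes both predicates hold, that is some $M\subseteq S\cap D_m$ with $\delta(S\setminus M)=1$ and $\gamma(S\setminus M)=1$, which is exactly $\mu(S)=1$ by \Cref{def:mit_feas}. With this formalisation, the equivalence in \Cref{theorem:res_guar} is obtained by unfolding both sides.

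For the direction ($\Leftarrow$) I would split into two cases. If $\delta(S)=1$ and $\gamma(S)=1$, then by \Cref{lemma:soundness_delta} no required device for an active task or goal lies in $S$. Also $\psi(S)\ge\theta(S)$, so the system is operating within acceptable bounds and is resilient in the first sense. If instead $\mu(S)=1$, take the witness $M$ given by \Cref{def:mit_feas}. Applying the mitigation to $M$ (which takes effect before the next control cycle, by the standing assumptions) yields the effective compromised set $S\setminus M$. This set is tolerable in both senses, so the system is resilient in the second sense.

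For the direction ($\Rightarrow$) I would argue by contraposition. Suppose $\delta(S)=0$ or $\gamma(S)=0$, and also $\mu(S)=0$. Then the system does not currently tolerate the disruption. Moreover, every admissible mitigation acts on some subset of $S\cap D_m$, since mitigations exist only for devices in $D_m$ and only compromised devices need mitigating. So the post-mitigation set is $S\setminus M$ for some such $M$, and $\mu(S)=0$ says that no such set is tolerable. Hence neither clause of the informal definition holds, and the system is not resilient.

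One could also add a remark that the first disjunct is subsumed by the second via $M=\emptyset$, since $\emptyset\subseteq S\cap D_m$. Thus the condition is equivalent to $\mu(S)=1$ alone; this observation makes the case analysis cleaner. \Cref{lemma:gamma-monotone} can be cited to note that enlarging $M$ never breaks $\gamma$, although it is not needed for the equivalence.

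The main obstacle is the ($\Rightarrow$) direction, because it depends on the informal notion of resilience being faithfully captured by $\mu$. Specifically, one must justify that every recovery capability of the system is modelled as removing some $M\subseteq S\cap D_m$ from the compromised set. I would make this explicit as the formal reading of ``resilient'', grounded in the definition of $A$ and $D_m$, rather than try to derive it.
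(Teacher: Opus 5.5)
Your proposal is correct and follows essentially the same route as the paper, whose proof sketch is the same definitional case split: $\delta(S)=\gamma(S)=1$ gives resilience directly (via \Cref{lemma:soundness_delta} and \Cref{def:tol_deg}), and otherwise $\mu(S)=1$ supplies a restoring mitigation by \Cref{def:mit_feas}. Your explicit contrapositive for the ($\Rightarrow$) direction reads ``resilient'' as ``some $M\subseteq S\cap D_m$ restores tolerability'', and this spells out the half that the paper's sketch leaves implicit in its informal definition of resilience.
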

\begin{markrev}
\noindent  Intuitively, the system is resilient if it is either already tolerable or can be made tolerable by mitigating some compromised devices. Note that $\mu(S)=1$ holds whenever $\delta(S)=1$ and $\gamma(S)=1$, since the empty set $M=\emptyset$ trivially satisfies \Cref{def:mit_feas}. We specify the first disjunct to make the no-action case explicit, although $\mu(S)=1$ alone already expresses resilience.
\end{markrev}

\begin{proof}[Proof sketch]
\begin{markrev}
This theorem is a direct application of \Cref{lemma:soundness_delta,lemma:strictness_delta} together with \Cref{def:tol_deg,def:mit_feas}.
\end{markrev}
If \( \delta(S) = 1 \) and \( \gamma(S) = 1 \), then the system maintains both structural integrity and acceptable performance, despite the disruption. 
If this condition does not hold, \begin{markrev}
but \( \mu(S) = 1 \), then by \Cref{def:mit_feas} a mitigation restores both disruption and degradation tolerability. 
\end{markrev}
In either case, the system remains resilient under the definition.
\end{proof}

So far, we have assumed the robot is aware of which devices are compromised. In practice, this information is not self-evident, and it must come from an external source. For example, an IDS. The following theorem shows how system-level resilience can still be preserved under the assumption of sound detection.

\begin{theorem}[IDS-informed resilience evaluation]\label{thm:ids_resilience}
Let $S$ be the current set of compromised devices based on the IDS output $I(d)$ (\Cref{def:prob_ids}) and the threshold function $\kappa(d)$ (\Cref{def:device_threshold}), as defined in \Cref{def:ca_compromised}.

\begin{markrev}While \Cref{theorem:res_guar} determines whether the system is resilient, the following cases distinguish a fully functional system from a degraded one, and identify when such a system is beyond recovery.\end{markrev}

The system's resilience posture can then be evaluated according to the following cases:
\begin{itemize}
  \item If both $\delta(S) = 1$ and $\gamma(S) = 1$, then the system tolerates the disruption without degradation.
  \item \begin{markrev} Otherwise, \end{markrev} if $\mu(S) = 1$, then the system is not disruption tolerant, but there exists a mitigation that restores tolerability, albeit with some possible performance degradation.
  \item \begin{markrev} Otherwise, if $\mu(S) = 0$, \end{markrev} then the disruption is not tolerable, and the system must initiate safe-state mitigation or halt.
\end{itemize}
\end{theorem}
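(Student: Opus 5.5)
The plan is to treat \Cref{thm:ids_resilience} as a case analysis over the truth values of the predicates $\delta$, $\gamma$ and $\mu$, evaluated at the compromised set $S$ fixed by \Cref{def:ca_compromised}. First, I will observe that $S$ is a well-defined subset of $D$. Since $D$ is finite and fixed, and $I$ and $\kappa$ are total functions, membership $I(d) \geq \kappa(d)$ is decided for every $d$. So $S \in 2^D$, and $\delta(S)$, $\gamma(S)$ and $\mu(S)$ are all defined. This step is what lets the earlier results, stated for an arbitrary $S \subseteq D$, be applied verbatim to the IDS-derived set. Under the standing assumption that the mappings are correct and complete, the IDS-induced $S$ is exactly the set the predicates are meant to reason about.

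Next, I will check that the three bullets are mutually exclusive and exhaustive. They are tested in order: the first requires $\delta(S)=\gamma(S)=1$; the second requires the negation of the first together with $\mu(S)=1$; the third requires the negation of both. Hence exactly one case applies. This is just the partition of the Boolean space $(\delta(S)\land\gamma(S), \mu(S))$ into three regions. The remark after \Cref{theorem:res_guar} (taking $M=\emptyset$) shows that the first region is contained in $\{\mu(S)=1\}$. Consequently the third case, $\mu(S)=0$, forces $\lnot(\delta(S)=1\land\gamma(S)=1)$, so it is consistent.

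For the meaning of each case, I will argue as follows.
\begin{itemize}
\item \emph{Case 1.} By \Cref{lemma:soundness_delta}, no required device for an active task or goal lies in $S$. By \Cref{def:tol_deg}, the threshold is $\theta(S)=\theta_{\mathrm{base}}$ and $\psi(S) \geq \theta(S)$. So the system tolerates the disruption within acceptable bounds.
\item \emph{Case 2.} \Cref{def:mit_feas} supplies $M \subseteq S \cap D_m$ with $\delta(S\setminus M)=\gamma(S\setminus M)=1$. Because mitigations take effect before the next control cycle, the post-mitigation compromised set is $S\setminus M$, to which Case 1's reasoning applies. The phrase ``possible degradation'' is justified because $\psi(S\setminus M)$ need only exceed $\theta$, not equal $1$. \Cref{lemma:gamma-monotone} can be cited to note that $\gamma$ remains true for any further reduction of $S\setminus M$.
\item \emph{Case 3.} I will invoke \Cref{theorem:res_guar}: the disjunction $(\delta(S)=1\land\gamma(S)=1)\lor\mu(S)=1$ is false, so the system is not resilient. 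No subset of mitigable devices restores tolerability, and only a safe-state or halt action remains.
\end{itemize}

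The main obstacle I expect is not computational but interpretive. The statements ``must initiate safe-state mitigation or halt'' and ``restores tolerability'' are operational conclusions, not predicates. I will handle this by stating explicitly that they follow by definition from the non-resilience given by \Cref{theorem:res_guar} and from the timing assumption on mitigations. A related subtlety is that in Case 2 the threshold $\kappa$, and hence $S$, could in principle change after mitigation, since reassignment changes criticality. I will fix $S$ within the control cycle, consistent with the assumption that all quantities are evaluated at the current time.
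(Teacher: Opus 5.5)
Your proposal is correct and takes essentially the same route as the paper, which proves the theorem by the same three-way case split on $\delta(S)\land\gamma(S)$ and $\mu(S)$, reading each case directly off \Cref{def:tol_dis,def:tol_deg,def:mit_feas}. Your additions are sound refinements: the well-definedness of $S$, the $M=\emptyset$ argument showing the cases partition the possibilities, and obtaining non-resilience in Case 3 from \Cref{theorem:res_guar} rather than straight from the definition of $\mu$; like the paper, you read ``without degradation'' in Case 1 as ``within acceptable bounds'', which is unavoidable since $\gamma(S)=1$ does not force $\psi(S)=1$.
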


\begin{proof}[Proof sketch]
Let $S$ denote the IDS-reported set of compromised devices. Then we have three possible cases:

\textbf{Case 1:} $\delta(S) = 1$ and $\gamma(S) = 1$.  
By \Cref{def:tol_dis,def:tol_deg}, the system is resilient without further action.

\textbf{Case 2:} $\mu(S) = 1$ \begin{markrev} and Case 1 does not hold.  
By \Cref{def:mit_feas}, there exists $M \subseteq S \cap D_m$ with $\delta(S \setminus M) = 1$ and $\gamma(S \setminus M) = 1$, so applying the corresponding mitigation restores tolerability.
\end{markrev}

\textbf{Case 3:} $\mu(S) = 0$.
By definition, there is no such mitigation policy that can be applied to recover a resilient condition as in Case 2. Therefore, the robot is considered irreparably compromised.
\end{proof}

\begin{table*}[t]
\centering
\caption{\Cref{ex:1}: Task- and goal-criticality mappings for the wheeled robot.}
\label{tab:ex1}
\renewcommand\arraystretch{1.1}
\begin{tabular}{llcccc}
\toprule
\textsc{Device} & \textsc{Role} &
$\tau(d,\text{TransportCube})$ &
$\varepsilon(d,\text{AvoidCollision})$ &
\textsc{Compromised} &
$\kappa(d)$\\
\midrule
\rowcolor{green!5}
\textbf{Camera~$C_1$} & Perception (detection, navigation) & \textbf{2} & \textbf{2} & No & $\kappa_{\mathrm{crit}}=0.4$\\
\rowcolor{red!5}
Cameras~$C_3$,$C_5$ & Perception (redundant) & 1 & 1 & \textbf{Yes} & $\kappa_{\mathrm{base}}=0.8$\\
\rowcolor{green!5}
Cameras~$C_2$,$C_4$,$C_6$--$C_8$ & Perception (redundant) & 1 & 1 & No & $\kappa_{\mathrm{base}}=0.8$\\
\arrayrulecolor{gray!50}\cmidrule(lr){1-6} 
\arrayrulecolor{black}
\rowcolor{green!5}
\textbf{Arm~$A_1$} & Manipulation (pick, place) & \textbf{2} & 1 & No & $\kappa_{\mathrm{crit}}=0.4$\\
\rowcolor{green!5}
Arms~$A_2$--$A_3$ & Manipulation (redundant) & 1 & 1 & No & $\kappa_{\mathrm{base}}=0.8$\\
\rowcolor{red!5}
Arm~$A_4$ & Manipulation (redundant) & 1 & 1 & \textbf{Yes} & $\kappa_{\mathrm{base}}=0.8$\\
\midrule
\multicolumn{6}{p{62em}}{\footnotesize
$\tau,\varepsilon\in\{0,1,2\}$ (0~=~none, 1~=~important, 2~=~required). Rows in light red indicate compromised devices; in light green, non-compromised ones.}\\
\bottomrule
\end{tabular}
\end{table*}

\section{Evaluation}
In the previous section, we have fleshed out the formal components sitting at the foundation of our framework. In this section, we provide a formal, analytical evaluation of such components, first of a disruption tolerant embodiment (\Cref{ex:1}, and then of a non-tolerant one (\Cref{ex:2}).

\subsection{Evaluation of \Cref{ex:1}}
In light of \Cref{def:task-crit-amp,def:goal-crit-amp} for task- and goal-criticality mappings we must enrich \Cref{ex:1}, defining at least one task and one embodiment preservation goal. 

As a brief reminder, the robot is equipped with eight cameras and four robotic arms. The robot's task is the following: to move from point A to point B, pick up a blue-coloured cube, and bring it back to point A. We call this task \textit{TransportCube}, and we mark one camera and one robotic arm as critical devices for completing it. As an embodiment preservation goal, the robot must avoid any type of collision to preserve its embodiment integrity and we name this goal \textit{AvoidCollision}. For ensuring this goal, only one camera is marked as critical. 

Following \Cref{def:prob_ids,def:device_threshold}, we define for each device the IDS thresholds that determine when a device is considered compromised. The IDS sensitivity depends on the device’s criticality: it must react more readily to anomalies on critical devices. Consequently, the threshold $\kappa_{\mathrm{crit}}$ is set lower than the baseline $\kappa_{\mathrm{base}}$ used for non-critical devices. \Cref{tab:ex1} summarises the task- and goal-criticality mappings together with their respective IDS thresholds.

We now prove that the robot in \Cref{ex:1} is tolerant to disruptions. For brevity, we have defined only one task and goal, \textit{TransportCube} and \textit{AvoidCollision}, respectively. Therefore, \( T_{\mathrm{active}} = (TransportCube) \) and \( G_{\mathrm{active}} = (AvoidCollision)\). The embodiment is under attack and two cameras are deactivated, $C_3$ and $C_5$, as well as the robotic arm $A_4$, entailing that the set $S$ of compromised devices is composed as $S = (C_3,C_5,A_4)$. 

Only the cameras $C_3$ and $C_5$ are involved in goal \textit{AvoidCollision}, but all three devices in $S$ are involved in task \textit{TransportCube}'s completion. Following \Cref{def:tol_dis}, we must therefore verify that for none of these devices neither the task-criticality mapping $\tau(d,\text{TransportCube}) = 2$, nor the goal-criticality mapping $\varepsilon(d,\text{AvoidCollision}) = 2$. From \Cref{tab:ex1}, we can verify that both task- and goal-criticality mappings for each of the compromised devices in $S$ is 1. Consequently, $\delta(S) = 1$, and the embodiment is disruption tolerant. 
\begin{markrev}
Since no compromised device in $S$ is critical, the applicable threshold is $\theta(S) = \theta_{\mathrm{base}}$; assuming $\psi(S) \geq \theta_{\mathrm{base}}$ (e.g., $\psi(S) = 0.95 \geq 0.75$), we obtain $\gamma(S) = 1$.
\end{markrev}
Therefore, according to \Cref{theorem:res_guar}, the system is resilient and, following \Cref{thm:ids_resilience}, it tolerates the disruption without any observable degradation.

\subsection{Evaluation of \Cref{ex:2}}
\begin{table*}[t]
\centering
\caption{\Cref{ex:2}: Task- and goal-criticality mappings for the wheeled robot.}
\label{tab:ex2}
\renewcommand\arraystretch{1.1}
\begin{tabular}{llcccc}
\toprule
\textsc{Device} & \textsc{Role} &
$\tau(d,\text{TransportCube})$ &
$\varepsilon(d,\text{AvoidCollision})$ &
\textsc{Compromised} &
$\kappa(d)$\\
\midrule
\rowcolor{red!5}
\textbf{Camera~$C_1$} & Perception (detection, navigation) & \textbf{2} & \textbf{2} & \textbf{Yes} & 
$\kappa_{\mathrm{crit}}=0.4$\\
\rowcolor{red!5}
Cameras~$C_3$,$C_5$ & Perception (redundant) & 1 & 1 & \textbf{Yes} & $\kappa_{\mathrm{base}}=0.8$\\
\rowcolor{green!5}
Cameras~$C_2$,$C_4$,$C_6$--$C_8$ & Perception (redundant)  & 1 & 1 & No & $\kappa_{\mathrm{base}}=0.8$\\
\arrayrulecolor{gray!50}\cmidrule(lr){1-6} 
\arrayrulecolor{black}
\rowcolor{green!5}
\textbf{Arm~$A_1$} & Manipulation (pick, place) & \textbf{2} & 1 & No & $\kappa_{\mathrm{crit}}=0.4$\\
\rowcolor{green!5}
Arms~$A_2$--$A_3$ & Manipulation (redundant)  & 1 & 1 & No & $\kappa_{\mathrm{base}}=0.8$\\
\rowcolor{red!5}
Arm~$A_4$ & Manipulation (redundant) & 1 & 1 & \textbf{Yes} & $\kappa_{\mathrm{base}}=0.8$\\
\midrule
\multicolumn{6}{p{62em}}{\footnotesize
$\tau,\varepsilon\in\{0,1,2\}$ (0~=~none, 1~=~important, 2~=~required). Rows in light red indicate compromised devices; in light green, non-compromised ones.}\\
\bottomrule
\end{tabular}
\end{table*}

We now turn to \Cref{ex:2}. The task and goal remain identical to those defined in \Cref{ex:1}, namely \textit{TransportCube} and \textit{AvoidCollision}. The set of compromised devices is also unchanged, except for one addition: camera~$C_1$, which is now compromised. This modification is summarised in \Cref{tab:ex2}.

The evaluation proceeds analogously to \Cref{ex:1}, with the key difference being the inclusion of a critical device among the compromised devices. Specifically, camera~$C_1$ is marked as critical for both task and embodiment preservation, expressed as $\tau(C_1,\text{TransportCube})=2$ and $\varepsilon(C_1,\text{AvoidCollision})=2$. Consequently, according to \Cref{def:tol_dis} and \Cref{lemma:strictness_delta}, $\delta(S)=0$ and the embodiment is not disruption-tolerant. We must therefore assess whether the embodiment satisfies, at minimum, the condition of degradation tolerance as expressed in \Cref{def:tol_deg}.

Assume that system performance under disruption is $\psi(S) = 0.9$, and that the performance thresholds are defined as follows: $\theta_{\mathrm{crit}} = 0.85 > \theta_{\mathrm{base}} = 0.75$. The set of compromised devices $S$ includes at least one critical device, camera $C_1$, with $\tau(C_1, TransportCube) = 2$ and $\varepsilon(C_1, AvoidCollision) = 2$. Following \Cref{def:tol_deg}, the applicable threshold is $\theta(S) = \theta_{\mathrm{crit}} = 0.85$ and $\psi(S)= 0.9 \geq \theta(S) = 0.85$, thus $\gamma(S) = 1$.

Although quantitative performance remains within acceptable bounds,
\begin{markrev}
the first resilience condition of \Cref{theorem:res_guar} does not hold, since $\delta(S)=0$. Resilience therefore depends on the second condition, $\mu(S)=1$, i.e., whether a mitigation exists which restores tolerability.
\end{markrev}

We assume that the system is equipped with a reconfiguration strategy $m_{reconfig}$, capable of isolating compromised devices and re-assigning their tasks to other, similar devices. Therefore, $m_{reconfig}$, can isolate a compromised device $M \subseteq S$ and assign its tasks to alternative on-board devices. In our case example, $M=\{C_1\}$ indicates the system's ability to execute $m_{reconfig}$, substituting the critical camera $C_1$ with another on-board camera $C_n$, selected from a subset of safe devices $\{C_2,C_4,C_6,..,C_8\}$. 

\begin{markrev}
Substituting $C_1$ with $C_2$ yields the residual set $S \setminus M = \{C_3, C_5, A_4\}$, which contains no critical device; hence $\delta(S \setminus M) = 1$. Since $S \setminus M \subseteq S$ and $\gamma(S) = 1$ was established above, \Cref{lemma:gamma-monotone} gives $\gamma(S \setminus M) = 1$. Following \Cref{def:mit_feas}, $\mu(S) = 1$, and the system has restored both disruption and degradation tolerability.
\end{markrev}
In summary, the evaluations of \Cref{ex:1} and \Cref{ex:2} jointly demonstrate the framework’s internal consistency and reasoning power. When no critical components are compromised, the predicates correctly yield disruption and degradation tolerability ($\delta(S) = \gamma(S) = 1$), confirming inherent resilience. When a critical component is affected, 
\begin{markrev} 
the first resilience condition fails ($\delta(S) = 0$), and resilience is determined by mitigation feasibility: the framework verifies $\mu(S) = 1$, establishing that tolerability is restorable.
\end{markrev}

These examples collectively illustrate that the proposed formalism enables precise, verifiable reasoning about resilience, degradation, and mitigation in embodied CPSs under device-level cyberattacks.

\section{Future Work}
While the proposed framework provides a theoretical foundation for reasoning about resilience in embodied CPSs, further experiments and evaluations are warranted.

First, we will investigate how the framework can be deployed on real embodied CPSs and to what extent such integration can effectively mitigate cyberattacks. The current framework does not model probabilistic false positives or false negatives of IDS outputs, nor have we validated its runtime scalability. Future work will address these aspects through simulation and empirical studies (e.g., ROS~2 with Gazebo and real robotic platforms), evaluating the predicates $\delta$, $\gamma$, and $\mu$, under live or simulated IDS outputs.

Second, we will examine how different IDS sensitivity settings and mitigation strategies influence overall resilience outcomes, with the aim of identifying optimal trade-offs between sensitivity, false alarms, and operational continuity.

Finally, we intend to explore alignment between the proposed formalism and emerging assurance and certification standards for safety-critical CPSs, thereby enabling its integration into structured dependability and resilience arguments.

\section{Conclusion}
This paper presented a formal framework for evaluating resilience in embodied CPSs under device-level disruptions. The proposed framework unifies classical fault-tolerance reasoning with IDS-based dependability evaluation, advancing formal assurance methods for resilient CPSs, and providing the foundation for future empirical validation and integration with real-world robotic systems.

By introducing the predicates of tolerable disruption ($\delta$), tolerable degradation ($\gamma$), and mitigation feasibility ($\mu$), the framework enables structured reasoning about whether an embodied system can continue operating safely and effectively in the presence of compromised components. Formal lemmas and theorems demonstrated the internal soundness of the model, while analytical examples showed its ability to distinguish between disruption-tolerant and non-tolerant embodiments and to validate feasible mitigation strategies. 

\balance

\section*{Acknowledgment}
This work has been supported by the Wallenberg AI, Autonomous Systems and Software Program (WASP) funded by the Knut and Alice Wallenberg Foundation.

The author would like to thank Prof. A. Saffiotti, Dr. M. Kashani, and E. Miotto for their insights and comments that helped to further refine this paper. 

The author acknowledges the use of GPT-5 (OpenAI, San Francisco, CA, USA) and Claude Opus 4.8 (Anthropic, San Francisco, CA, USA) to assist in drafting and refining portions of the formal reasoning and proofs. All AI-generated outputs were extensively and critically reviewed, corrected, and verified by the author for technical accuracy and logical consistency. The author takes full responsibility for the validity of all results presented in this paper.

\bibliographystyle{IEEEtran}
\bibliography{bibliography}

\end{document}